\newcolumntype{x}[1]{>{\centering\arraybackslash}p{#1}}
\newcolumntype{C}[1]{>{\centering\arraybackslash}p{#1}}
\newcommand\diag[4]{%
	\multicolumn{1}{p{#2}|}{\hskip-\tabcolsep
		$\vcenter{\begin{tikzpicture}[baseline=0,anchor=south west,inner sep=#1]
			\path[use as bounding box] (0,0) rectangle 
			(#2+2\tabcolsep,\baselineskip);
			\node[minimum width={#2+2\tabcolsep-\pgflinewidth},
			minimum  height=\baselineskip+\extrarowheight-\pgflinewidth] (box) 
			{};
			\draw[line cap=round] (box.north west) -- (box.south east);
			\node[anchor=south west] at (box.south west) {#3};
			\node[anchor=north east] at (box.north east) {#4};
			\end{tikzpicture}}$\hskip-\tabcolsep}}
\newcommand{\ie}{{\emph{i.e.\/}}}
\newcommand{\etal}{\emph{et al.}}
\newtheorem{definition}{Definition}
\newcommand{\C}{\ensuremath{\mathbb{C}}}
\newcommand{\ket}[1]{\ensuremath{|#1\rangle}}
\newcommand{\bra}[1]{\ensuremath{\langle#1|}}
\newcommand{\ketbra}[2]{\ensuremath{\ket{#1}\bra{#2}}}
\newcommand{\proj}[1]{\ensuremath{\ketbra{#1}{#1}}}
\newcommand{\braket}[2]{\ensuremath{\langle{#1}|{#2}\rangle}}
\newcommand{\1}{{\rm 1\hspace{-0.9mm}l}}
\newcommand{\connected}{\sim}
\newcommand{\SPAN}{\mathrm{span}}
\newcommand{\Urm}{\ensuremath{\mathrm{U}}}
\newcommand{\HH}{\mathcal{H}}
\newcommand{\XX}{\mathcal{X}}
\newcommand{\NOT}{\sigma_x}
\newcommand{\idop}[1][\XX]{\ensuremath{\1_{#1}}}
\newtheorem{theorem}{Theorem}
\newtheorem{example}{Example}
\def\hyperpic{
\begin{tikzpicture}
\node (v0) at (0,0) {};
\node (v1) at (2,0) {};
\node (v2) at (4,0) {};
\node (v3) at (1,1.732) {};
\node (v4) at (3,1.732) {};
\node (v5) at (1.1,3.464) {};
\node (v6) at (3.1,3.464) {};
\node (m12) at (1,0) {};
\node (m24) at (1.5,0.866) {};
\node (m41) at (0.5,0.866) {};
\node (m23) at (3,0) {}; 
\node (m35) at (3.5,0.866) {}; 
\node (m52) at (2.5,0.866) {}; 
\node (m45) at (2,1.732) {}; 
\node (m56) at (2.5,2.598) {}; 
\node (m64) at (1.5,2.598) {};
\begin{scope}[fill opacity=0.7]
\filldraw[fill=yellow!70] ($(v0)+(-0.25,-0.25)$) 
    to[out=-60,in=180] ($(m12)$) 
    to[out=0,in=240] ($(v1) + (0.25,-0.25)$) 
    to[out=60,in=-60] ($(m24)$) 
    to[out=120,in=0] ($(v3) + (0,0.3535)$)
    to[out=180,in=60] ($(m41)$) 
    to[out=240,in=120] ($(v0)+(-0.25,-0.25)$);
\filldraw[fill=blue!50, thick, dotted] ($(v1)+(-0.25,-0.25)$) 
    to[out=-60,in=180] ($(m23)$) 
    to[out=0,in=240] ($(v2) + (0.25,-0.25)$) 
    to[out=60,in=-60] ($(m35)$) 
    to[out=120,in=0] ($(v4) + (0,0.3535)$)
    to[out=180,in=60] ($(m52)$) 
    to[out=240,in=120] ($(v1)+(-0.25,-0.25)$);
\filldraw[fill=red!50, dashed] ($(v3)+(-0.4,-0.4)$) --
    ($(v4) + (0.4,-0.4)$) -- 
    ($(v6) + (0.4,0.4)$) --
    ($(v5) + (-0.4,0.4)$) --
    ($(v3)+(-0.4,-0.4)$);
\end{scope}
\foreach \v in {0,1,...,6} {\fill (v\v) circle (0.06) node [below] {$v_{\v}$};}
\node at (1,0.5) {$e_0$};
\node at (3,0.5) {$e_1$};
\node at (2,2.232) {$e_2$};
\end{tikzpicture}
}
\begin{document}

\title{Quantum walks on hypergraphs}

\author[1]{Przemys{\l}aw Sadowski}

\author[1]{{\L}ukasz Pawela}

\author[1,2]{Paulina Lewandowska\footnote{Corresponding author, E-mail:
		plewandowska@iitis.pl}}

\author[1,2]{Ryszard Kukulski}

\affil[1]{Institute of Theoretical and Applied Informatics, Polish Academy
	of Sciences, ulica Ba{\l}tycka 5, 44-100 Gliwice, Poland}
\affil[2]{Institute of Mathematics, University of Silesia, ul. Bankowa 14,
	40-007 Katowice, Poland}
\date{\empty}

\maketitle

\begin{abstract}
In this work we introduce the concept of a quantum walk on a hypergraph. We 
show that the staggered quantum walk model is a special case of a quantum walk 
on a hypergraph.
\end{abstract}
\section{Introduction}
\setlength{\extrarowheight}{0.1cm}
Quantum walks may be seen as an extension of the classical random walks into the
quantum realm. There is, however, one key difference. In the classical setting
the \emph{randomness} is built-into the process. In the quantum case, the entire
process is unitary, hence deterministic and even reversible. The randomness
comes only from the random nature of quantum measurements. 

During the last two decades the field of quantum walks has received a lot of 
attention from the scientific community. One of the earliest studies are the 
works by Aharonov~\cite{aharonov2001quantum} and Kempe~\cite{kempe2003quantum}. 
Soon afterwards the possibility for algorithmic applications was 
shown~\cite{ambainis2003quantum}. One notable application is the fact that 
Grover's search algorithm~\cite{grover1996fast} can be represented as a quantum 
walk. Another approach to database lookup is the quantum spatial search 
algorithm~\cite{childs2004spatial}. Finally, nontrivial results in the field of quantum games can be obtained even with a simple walk on a cycle~\cite{miszczak2014quantum}, and some more exotic problems like the Parrondo paradox can be modeled as a quantum 
walk~\cite{pawela2013cooperative}.

Since these seminal works a lot of different approaches to the concept of a 
quantum walk have emerged. We should note here the open quantum walk 
model~\cite{attal2012open,attal2012open2}. This model can be summarized as 
follows. Imagine we have a particle moving on a graph. The particle has a 
quantum state associated with it. With each transition from one vertex to  
another, the state is modified according to some quantum operation. The only 
restriction here is that all the operations associated with some vertex must 
sum to a proper quantum channel. There was a lot effort put into 
studying this approach. We should mention here various asymptotic results for 
this model~\cite{attal2015central,sadowski2016central}, hitting times 
studies~\cite{pawela2015generalized} and potential 
applications of this model in quantum modeling of biological 
structures~\cite{chia2016coherent}

Another model which deserves mention is the quantum stochastic
walk~\cite{whitfield2010quantum}. This approach is based on the
Gorini-Kossakowski-Sudarshan-Lindblad~\cite{gorini1976completely,lindblad1976generators}
 master equation. It allows to smoothly interpolate between classical and 
quantum walks as well as gives raise to some new dynamics. The asymptotic 
behavior of this model has been extensively 
studied~\cite{domino2016properties,domino2017superdiffusive}.

Finally, there has been a lot of effort put into the extension of the standard 
unitary quantum walk. Let us note here the Szegedy walk 
model~\cite{szegedy2004quantum} which allows for quantization of arbitrary 
Markov chain based algorithms. One of the most prominent example of usage of 
this model is the quantum Page Rank 
algorithm~\cite{paparo2012google}. Another example of such modification is the 
staggered walk model introduced by Portugal 
\etal~\cite{portugal2016staggered,konno2018partition,portugal2017staggered}. 
It has applications in quantum search algorithms~\cite{tulsi2016robust}.

In this work we introduce a novel concept - quantum walks on hypergraphs. Our 
main motivation is presented in Table~\ref{tab:motiv}. In there, we show how 
the currently developed quantum walk models are constructed. The goal of this 
work is to fill the part represented by the question mark.

\begin{table}
\centering
\begin{tabular}{x{2cm}|x{2cm}x{2cm}}
	\multicolumn{3}{c}{ \large \textbf{ Quantum walks summary}} 
	\tabularnewline[2ex]
	\diag{.1em}{2cm}{Range}{Unitary}&Reflection & Arbitrary\\ \hline
	Edges&Szegedy & coined\\
	Cliques&staggered & ?\\
\end{tabular}
\caption{Summary of existing quantum walk models. The aim of this work is to
find the model which fills the gap denoted by the question mark. By ``range'' we
mean where the unitaries involved in the model act. }\label{tab:motiv}
\end{table}

This work is organized as follows. In Section~\ref{sec:graphs} we introduce the 
concept of a hypergraph along with some accompanying definitions. Next, in 
Section~\ref{sec:walks} we recall well-established quantum walk models. 
Section~\ref{sec:model} introduces our model--quantum walk on a hypergraph, or 
hyperwalk. Next, in Section~\ref{sec:relation} how our model relates to other 
quantum walks. Finally, in Section~\ref{sec:conc} final conclusions are drawn.

\section{Graphs and hypergraphs}\label{sec:graphs}
In this section we provide basic definitions used throughout this work. We 
start with the definition of a graph
\begin{definition}
A graph $G$ is a pair $(V, E)$, where $V$ is a set of vertices and $E \subseteq
V \times V$ is a set of edges. We say an element $i \in V$ is connected with
element $j \in V$ when $(i, j) \in E$. We will denote this by $ i \connected j$. We call $G$ a directed graph if we 
consider the elements of $E$ as ordered pairs. Otherwise, $G$ is said to be 
undirected. If all vertices have the same degree, then such a graph is called a regular graph.
\end{definition}

Next we introduce the concept of a hypergraph.
\begin{definition}
An undirected \emph{hypergraph} $H$ is a pair $(V ,E)$, where $V$ is a set of
vertices as in the traditional graph case and $E$ is the set of edges defined as
a collection of subsets of vertices $E \subseteq 2^V$. If for every $e \in E$ we
have $|e|=k$ we call the hypergraph $k$-regular.
\end{definition}
Note that any 2-regular hypergraph is an ordinary graph. An example of a
hypergraph is shown in Fig.~\ref{fig:hypergraph}
\begin{figure}[h]
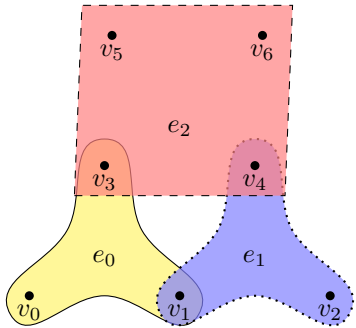

\centering\hyperpic \caption{A sketch of a hypergraph. Here $v_1,\ldots,v_6$
denote the vertices with two 3-hyperedges $e_0, e_1,$ and one 4-hyperedges
$e_2$.}\label{fig:hypergraph}
\end{figure}

\section{Walks}\label{sec:walks} In this section we present three
well-established models of quantum walks on graphs.

\subsection{The coined quantum walk}
Here we introduce the coined quantum walk model. We start with a simple walk on 
a line. Later we discuss the scattering walk model and move to arbitrary graphs.

In the simplest case a discrete time coined quantum walk on a line is given by a
bipartite system $\HH_C\otimes\HH_P$, where $\HH_C$ is a two-dimensional
Hilbert space with basis $\left\{ \ket{0} , \ket{1} \right\} $ and $\HH_P
=\SPAN(\left\{  \ket{n} : n \in \{0,\ldots,N-1\} \right\})$ is a position space.
Every step of the evolution $U$ is a composition of the coin and shift operators
\begin{equation}
\ket{\psi_{t+1}} = S(C\otimes\1_N) \ket{\psi_t},
\end{equation}
where $\ket{\psi_0}\in\mathcal{H_C}\otimes\mathcal{H_P}$ is some initial state.
The coin operator has a nice, tensor product because each vertex has the same 
degree $d=2$ which is equal to the dimensionality of the coin space. 

\subsection{Scattering quantum walk}

In order to model a coined quantum walk on a graph which is not regular, we must
modify this simple approach considerably. This can be achieved via the
scattering walk approach. In this case, we introduce separate coin operators
for every vertex $v$ of an graph $G$. Here, $G$ can be either directed or not.
Denoting the degree of the $i$\textsuperscript{th} vertex as $d_i$, we have that
each coin operator $C_i$ acts on $\C^{d_i}$. The entire space is
\begin{equation}
\XX = \C^{d_1} \oplus \ldots \oplus \C^{d_N}.
\end{equation}
The shift operator performs the scattering on a vertex $i$ given by formula 
\begin{equation}
S\ket{i,j}=\ket{j,i}.
\end{equation}
Let us consider a 
particle coming to vertex $j$ from some vertex $i$, \ie\ moving along the edge 
$(i,j)$. It becomes scattered after the shift operation, meaning that with 
equal probability it gets transfered to all other edges outgoing from $j$ and 
gets reflected back to $i$ along the $(j, i)$ edge, provided it exists in $G$. 
Hence we have for every vertex $i$ we have~\cite{hillery2003quantum}
\begin{equation}
U \ket{i, j} = r_i\ket{j, i}+t_i\sum_{ 
	v \connected j, v \neq i
	 } \ket{j,v}.
\end{equation}
Of course unitarity requires $|r_i|^2+(\deg(v_i)-1)|t_i|^2=1$ and the entire evolution is 
given by this formula.
 In the case of a directed graph $G$, we need to remember that each
edge $(i,j)$ can be seen as two directed edges.

For regular structures this simplifies to
\begin{equation}
C = C_0^{\oplus N}\cong C_0 \otimes \1_N.
\end{equation}
Let consider as an example the simple case when $r_i=1$ for every $i$. In this
case the shift operator for each edge subspace $\XX_{i,j}=\SPAN(\{\ket{i,j},
\ket{j,i}\})$ acts as $\NOT$ operator. Using a different basis, we may write the
space $\XX$ as $\XX_E=\bigoplus_{i\connected j}\XX_{i,j}$. Then, the shift
operator takes the form
\begin{equation}
S'=\NOT^{\oplus |E|} \cong \NOT \otimes \1_{|E|}.
\end{equation}
The shift operator is a block operator in 
\begin{equation}
\XX=\bigoplus_{i\connected j} \SPAN(\{\ket{i,j}, \ket{j,i}\})
\end{equation} and the coin operator is a block 
operator in 
\begin{equation}
\mathcal{Y}=\bigoplus_{j\in V} \SPAN(\{\ket{i,j}\}_{i\connected j}).
\end{equation}
Additionally we allow cases, when operator $C$ changes in time in a cyclic
manner and call such model a generalized coined walk model.

\subsection{The Szegedy walk model} The Szegedy walk model was first introduced
in~\cite{szegedy2004quantum} as a model which allows quantization of arbitrary
Markov chain based algorithms. The model is as follows. We start with undirected
graph $G_C=G(V,E)$ and we set a bipartite graph $G_S=G(V \cup V', F)$ where $V'$
is the same as $V$ with all elements primed. As for the edges we have $(i,j')
\in F$ 
if and only if $(i,j)\in E$. The evolution is given by the unitary operators
$U_1, U_2 U_1, U_1 U_2 U_1, \ldots$ acting on the space $\SPAN(\{\ket{x,y'}: x
\in V, y' \in V' \})$.

We define reflections
\begin{equation}
\begin{split}
U_1=2 \sum_V \proj{d_v} - \1, \\
U_2=2 \sum_V \proj{\overline{d}_v} - \1,
\end{split}
\end{equation}
and unit vectors
\begin{equation}
\begin{split}
\ket{d_v}=\ket{v} \otimes \sum_{w\connected v} a_{v,w} \ket{w},\\
\ket{\overline{d}_v}=\sum_{w\connected v} a_{v,w} \ket{w} \otimes \ket{v},
\end{split}
\end{equation}
where $a_{v,w}$ are complex constants.

Usually, the unitary operators driving the evolution for the Szegedy walk model 
are chosen as presented above. In our work, we
assume that unitary operators can be chosen arbitrary with only assumption of
respecting the graph structure \ie\ the movement between not connected vertices 
is forbidden. 

\subsection{The staggered walk}

To formally introduce the staggered walk model, we first introduce the following
definitions. We will follow the naming used by Portugal 
\etal~\cite{portugal2016staggered}
\begin{definition}
A \emph{tessellation} of a set $A$ is a collection $\alpha = \{ p_k\}_k$ 
of subsets of $A$, $p_k \subset A$, such that $\bigcup_k p_k = A$ and 
$p_k \cap p_{k'}=\emptyset$ for $k \neq k'$.
\end{definition}

\begin{definition}
A tessellation of a graph $G=(V,E)$ is a tessellation of $V$ such that 
each $p_k$ forms a clique or is a single vertex. We will call $p_k$ a polygon.
\end{definition}
Note that this definition allows for a polygon to contain a single vertex. The
staggered quantum walk on a graph is defined using at least one graph
tessellation.

\begin{definition}
Given a graph $G(V,E)$ and its $n$ tessellations $\alpha_1,\ldots,\alpha_n$, 
$\alpha_k=\{p_{k,i}\}_i$, for $k = 1, \ldots, n$, the
staggered quantum walk is defined by the evolution operator $U \in 
\Urm(\XX^V)$, where $\XX^V = \C^{|V|}$:
\begin{equation}
U = U_n \dots U_2 U_1,
\end{equation}
where
\begin{equation}
\begin{split}
U_k = &2\sum_{i=1}^{|\alpha_k|} \proj{d_{k,i}} - 
\idop[\XX^V].
\end{split}
\end{equation}
The states $\ket{d_{k,i}}$ are:
\begin{equation}
\begin{split}
\ket{d_{k,i}}= \sum_{j \in p_{k,i}} a_{k,j} \ket{j}
\end{split}
\end{equation}
where $a_{k,j}$ are complex amplitudes.

For the staggered quantum walk model we assume, that unitary operators can be 
also chosen arbitrarily.
\end{definition}

\section{Hyperwalk model}\label{sec:model} In this section we introduce the
concept of quantum walks on hypergraph networks along with some intuitions. We
will call this model the \emph{quantum hyperwalk} model.

Now we want to emphasize that the coined quantum
walk model can be described as a
composition of two operators that take block operator form with respect to two
different decompositions (tessellations) of the computational basis. The main
restriction in the model is that the decomposition (tessellation) corresponding
to the edges of the graph always consists of sets with two basis states. We aim
at loosening this restriction and developing a quantum walk model suitable for
hypergraphs, in this sense, that the tessellation of a 
hypergraph
$H=(V,E)$ is tessellation of basis states $\left\{ \ket{v,e}: v \in V,e \in E
\right\}$.

\begin{definition}
We define a hyperwalk on a hypergraph $(V, E)$ as a composition $U^E U^V$ 
of two
unitary operators: $U^V$ and $U^E$ on the space $\XX=\SPAN(\{\ket{v,e}: e \in 
E\wedge v \in e\})$, where
\begin{equation}
U^V = \sum_{v\in V} C_v,
\end{equation}
\begin{equation}
U^E = \sum_{e\in E} S_e,
\end{equation}
for $C_v$ being a coin operator for a fixed vertex $v$ acting on
$\SPAN (\{\ket{v,e} \}_{e \in E})$ and $S_e$ being a shift operator for a fixed 
edge $e$
acting on $\SPAN ( \{\ket{v,e} \}_{v \in e})$. 
\end{definition}

We also introduce a generalized version of this model. By a \emph{generalized
hyperwalk} we mean an instance of a hyperwalk for which the underlying unitaries
change with time.

Here we present an example of construction of such walk.

\begin{example}
We define $C_v=\1_{\deg(v)}-2\ketbra{\psi_v}{\psi_v}$ for
$\ket{\psi_v}=\frac{1}{\sqrt{\deg{(v)}}} \sum\limits_{\{e: v \in 
e\}}\ket{v,e}$ and 
$S_e=\1_{|e|}-2\ketbra{\psi_e}{\psi_e}$ 
for $\ket{\psi_e}=\frac{1}{\sqrt{|e|}}\sum_{v \in e}\ket{v,e}$ obtaining a 
hypergraph generalization of the
Grover's walk. Let us note that for a 2-regular hypergraph, \ie\ an ordinary
graph, we obtain $S_e$ which are two dimensional Grover's diffusion operator
that are equal to $\NOT$. This shows that our model recovers the proper
behaviour for a hypergraph which reduces to an ordinary graph.
\end{example}
\begin{example}
The idea of a hyperwalk gives the possibility to implement walks on directed
graphs. The basic way to ensure that  computation performed with use of directed
connections is reversible (unitary) is to ensure that for each vertex the number
of directed inputs and outputs is the same. In order to satisfy this condition
for a finite graph the directed connections must contain loops, which may be
seen as hyperedges. Thus, we define a quantum walk with directed edges as
\begin{equation}
S_E = \sum_{(v_1, ..., v_{l_e})=e\in E} \sum_{i=1}^{l_e} 
\ketbra{v_{i+1},e}{v_i,e},
\end{equation}
where $E$ is a set of edges defined as ordered sequences of vertices. In 
the case of 2-element edges we recover the canonical shift operator. For a
hyperedge we obtain cyclic shift among the loop constructed by this edge.
\end{example}

Additionally we allow the case when operators $U^V, U^E$ change in time in a
cyclic manner and call such model a \textit{generalized hyperwalk}.
 
Hyperwalk model can be seen as a natural generalization of the coined walk
model. This generality comes from two facts. First of all, by using hyperedges,
it is possible to construct higher dimensional space of basis states than in coined
walk. Second, loosening the restriction of shift operator to be a permutation
matrix, gives us additional dynamics in the constructed space.

On the other hand, sometimes, for defined graphs (hypergraphs), it is not 
trivial or it is not even possible to obtain the given walk evolution by using 
the hyperwalk model, despite the graph structures allows us to do so. The 
explanation of this problem and the formal comparison of introduced models is 
left to the next section.
\section{Relations between models}\label{sec:relation}

In this section we want to compare walk models
discussed in the previous sections. To do this we introduce two alternative
definitions of comparing walks and next we present our results of comparing
quantum walk models.

To clarify notation let $QW_A(G, \ket{\psi}, n)$ denote the state after $n$
iterations of discrete-time quantum walk model $A$ on graph $G$ with initial an
state $\ket{\psi}$. We also use the notation $\mathcal{P}$ for measurement on
vertices, where the probability of finding state $\ket{\psi}$ in vertex $v$ is 
denoted by $\mathcal{P}(\ket{\psi})(v)$.

\begin{definition}\label{def1}
	For given two models $A$ and $B$, we say that model $A$ is an instance of 
	model $B$ ($A \preceq B$) when for all graphs $G_A(V_1,E_1)$, measurements
	$\mathcal{P}_A$ and initial states $\ket{\psi_A}$ there exists 
	$G_B(V_2,E_2)$, where $V_1 \subset V_2$, measurement $\mathcal{P}_B $ and 
	initial state $ \ket{\psi_B}$, such that for all $n_0\in \mathbb{N}_0$, 
	exists $n_1 \in \mathbb{N}_0$ we have 
	\begin{equation}
	\mathcal{P}_A(QW_A(G_A,\ket{\psi_A},n_0))(v) = \mathcal{P}_B(QW_B(G_B,\ket{\psi_B},n_1))(v).
	\end{equation}
\end{definition}

Unfortunately, this definition allows us to find equivalence between walk 
models, which are very loosely based on the underlying structure. Some examples 
of equivalence of walk models which can be derived from this definition are 
presented below.

\begin{enumerate}
\item Szegedy $\preceq$ coined and coined $\succeq$ Szegedy: In the coined walk
model we are given a set of basis states $\ket{i,j}$, where $i,j \in V$ if
$(i,j) \in E$ and in Szegedy walk model our basis states are $\ket{i,j'}$ if
$(i,j) \in E$. There exists a bijection $\Xi$, that is defined as $\Xi
(\ket{i,j}) = \ket{i,j'}$. Now we can assume, that if $C$ is a given operator we
take $U_1=C$ and $U_2$ is constructed as in definition of Szegedy walk model. If
$U_1$ is given then we put $C=U_1$. Then, as we can observe, the equalities are
satisfied
\begin{equation}
\begin{split}
\Xi(SCSC\ket{\psi})=U_2 \Xi (SC \ket{\psi})=U_2 U_1 \Xi(\ket{\psi}).
\end{split}
\end{equation}
The above follows from the fact both space are the same up to labeling. This 
observation implies that measurements are connected by $P_{Sz}=\Xi(P_C)$, hence 
they are the same.

\item hyperwalk $\preceq$ coined: For a given hypergraph $H(V,E)$ with the
evolution operator $U_{H}=U^EU^V$, an initial state $\ket{\psi_0}_{H} = 
\ket{v,e}$ and a measurement $\mathcal{P}_{H}(\ket{\psi})(v)=\sum\limits_{e \in 
	E} |\braket{v,e}{\psi}|^2$, we create a bipartite graph $G(V_1 \cup 
V_2,F)$, where the sets $V_1=V$ and $V_2=E$ are the partitions of a bipartite graph, and $(v, e) \in F$ if and only if $v 
\in V_1$ is contained in an edge $e_1 \in E$. We construct a space with 
the basis vectors $\{\ket{v,e}, \ket{e,v} : v \in V, e \in E  \}$. We 
define the shift operator in standard form 
\begin{equation}
\begin{split}
S\ket{v,e}=\ket{e,v} \\
S\ket{e,v}=\ket{v,e},
\end{split}
\end{equation}
and the coin operator as
\begin{equation}
C=U^V \oplus U^E.
\end{equation}

Consequently, the evolution in quantum coined walk model with the initial state
$\ket{\psi_0}_{C}= \ket{v,e}$ is given by
\begin{equation}
\begin{split}
U_{C}=SCSC=SC(\ketbra{1}{0}\otimes U^V+\ketbra{0}{1}\otimes V^E)
\\
=S(\ketbra{1}{0}\otimes U^EU^V+\ketbra{0}{1}\otimes U^VU^E)= U^EU^V \oplus U^VU^E.
\end{split}
\end{equation}
Then the measurement should be 
$\mathcal{P}_{C}(\ket{\psi})(v)=\sum\limits_{e \in E} 
|\braket{v,e}{\psi}|^2$ and now it is clear that 
$\mathcal{P}_{H}(U_{H}^n\ket{\psi_0}_H)=\mathcal{P}_{C}(U_{C}^n\ket{\psi_0}_C)$.

\item generalized hyperwalk $\preceq$ staggered: We are given hypergraph
$H(V,E)$ with the evolution operator $U_{G_H,k}=U^E_kU^V_k$ ,for $k \in \{ 1,
\ldots, K \} $, an initial state $\ket{\psi_0}_{G_H} = \ket{v,e}$ and a
measurement $\mathcal{P}_{G_H}(\ket{\psi})(v)=\sum\limits_{e \in E}
|\braket{v,e}{\psi}|^2$. Let us note that we can consider an $N$ dimensional
system for staggered walk, where $N$ is the number of basis states in
generalized hyperwalk. We can set $W=\{\ket{v,e} : v \in V, e \in E, v \in e \}$
as a set of vertices for graph, which defined staggered walk on it and take
initial state $\ket{\psi_0}_S=\ket{\psi_0}_{G_H}$.  We introduce such
tessellations for which the unitary matrices $ U^{E}_i,U^{V}_j$ can be treated
as evolution operators. The measurement on the staggered walk model works on
proper group of vertices i.e.
$\mathcal{P}_{S}(\ket{\psi})(v)=\sum\limits_{\ket{v,e} \in W}
|\braket{v,e}{\psi}|^2$. It is clear that we obtain the same evolution, because
as introduced is this case base states in staggered walk are the same as in
generalized hyperwalk model.

\item staggered $\preceq$ generalized coined: We are given graph $G(V,E)$ with
$N$ vertices, $V=\{1,\ldots,N\}$, $k$ tessellations and unitaries $U_1, \ldots,
U_k$. We take an initial state $\ket{v_0}_S$ and measurement
$\mathcal{P}_{S}(\ket{\psi})(v)=|\braket{v}{\psi}|^2$. We construct a new graph
by adding to the set $V$ one vertex $t_{i,j}$ for each polygon $j$ in each
tessellation $i$. We connect every newly added vertex associated with some
polygon to vertices included in this polygon. Let us denote the basis states by
$\{\ket{v, t_{i}},\ket{t_{i},v}\}$, where $v \le N$, $i\le k-1$. Here, we
omitted the second index in vertices $t_{i,j}$, because parameters $v$ and $i$
are sufficient to determine vertex $t_i$ uniquely in state $\ket{v,t_i}$. The
generalized coined walk on this graph will be represented by $SCSC_{k-1}\ldots
SCSC_0$. Here, $S$ is the standard shift operator, $C$ is defined by
\begin{equation}
\begin{split}
C\ket{v,t_{i}}=\ket{v,t_{(i+1)_k}}, \\
C\ket{t_{i},v}=\ket{t_{i},v}.
\end{split}
\end{equation}
and $C_i$ is defined as 
\begin{equation}
\begin{split}
	C_i\ket{v,t_{j}}=\ket{v,t_{j}}, \\
	C_i\ket{t_{j},v}=\ket{t_{j},v}, \quad j \not= i \\
	C_i \ket{t_{i},v}= \sum_{w \in V} \bra{w}U_{i+1}\ket{v} \ket{t_{i},w}.
\end{split}
\end{equation}
The initial state can be chosen to be $\ket{\psi_0}_{G_C}= \ket{t_{0},v_0}$ and
the measurement of vertex $v$ is
$\mathcal{P}_{G_C}(\ket{\psi})(v)=\sum_{\ket{t_{i},v}}
|\braket{t_{i},v}{\psi}|^2$. To see the equality between both measurements, we
start with $\ket{v_0}_S$ and after the first step we obtain $U_1\ket{v_0}_S$ in
the staggered walk model. Assuming that the first step in the generalized coined
walk is given by $SCSC_0$, we get
\begin{equation}
\begin{split}
SCSC_0\ket{t_0,v_0}_{G_C}=SCS\sum_{w \in V}\bra{w}U_1\ket{v_0}_S\ket{t_0,w}=SC\sum_{w \in V}\bra{w}U_1\ket{v_0}_S\ket{w,t_0}
\\
=S\sum_{w \in V}\bra{w}U_1\ket{v_0}_S\ket{w,t_1}=\sum_{w \in V}\bra{w}U_1\ket{v_0}_S\ket{t_1,w}.
\end{split}
\end{equation}
Then $\bra{w}U_1\ket{v_0}_S=\bra{t_1,w}SCSC_0\ket{t_0,v_0}_{G_C}$, so finally we get
 $$\mathcal{P}_{G_C}(SCSC_{k-1} \ldots SCSC_0 \ket{t_0,v_0}_{G_C})(v)=\mathcal{P}_{S}( U_k \ldots U_1\ket{v_0}_S)(v).$$

\item generalized coined $\preceq$ coined: For a given graph $G$ with $N$
vertices,$\{1,\ldots,N\}$  with changing in time coins $C_0,\ldots,C_{k-1}$, an 
initial state $\ket{\psi_0}=\ket{v,w} $ and a measurement 
$\mathcal{P}_{G_C}(\ket{\psi_0})(v)=\sum_{w \in V} |\braket{v,w}{\psi_0}|^2$. We
introduce a new graph with $kN$ vertices $v^{(i)}$, where $v \le N$, $i \in
0,\ldots, k-1$. In this graph we have connections only between vertices
$v^{(i)}, w^{(i+1)_k}$, for $v ,w\le N, i \in 0,\ldots, k-1$ if and only if
$v\connected w$ in $G$. This generates new basis states
$\{\ket{v^{(i)},w^{(i+1)_k}},\ket{w^{(i+1)_k},v^{(i)}}\}$. In this model the
initial state will be $\ket{v^{(0)},w^{(k-1)}}$ and the state will evolve to
vertices with higher indexes and eventually come back to vertices with the index
zero. This means, we define the coin operator as
\begin{equation}
\begin{split}
C\ket{v^{(i)},w^{(i-1)_k}}=\sum_{z \in V} \bra{v,z}C_i\ket{v,w}\ket{v^{(i)},z^{(i+1)_k}},
\\
C\ket{v^{(i)}, w^{(i+1)_k}} = \ket{v^{(i)}, w^{(i-1)_k}}.
\end{split}
\end{equation}
The measurement on the
vertex $v$ is given on states associated with $v^{(i)}$ i.e.
$\mathcal{P}_{C}(\ket{\psi_0})(v)= \sum\limits_i \sum\limits_{w \connected v}
|\braket{v^{(i)},w^{(i-1)_k}}{\psi_0}|^2$.

After the first iteration in the coined walk model we have
\begin{equation}
\begin{split}
SC\ket{\psi_0}=SC\ket{v^{(0)},w^{(k-1)}}=S\sum_{z \in V} \bra{v,z}C_0\ket{v,w}\ket{v^{(0)},z^{(1)}}\\=\sum_{z \in V} \bra{v,z}C_0\ket{v,w}\ket{z^{(1)},v^{(0)}}.
\end{split}
\end{equation}
On the other hand, considering the generalized coined walk model gives us
\begin{equation}
\begin{split}
SC_0\ket{\psi_0}=SC_0\ket{v,w}=\sum_{z \in V} \bra{v,z}C_0\ket{v,w}\ket{z,v}.
\end{split}
\end{equation}
We can observe, that the both models give us the same evolution, which implies
that the measurement outcomes will be exactly the same.
\end{enumerate} 
	
As it can be seen, according to this definition the models are equivalent. This
result should not be surprising, as we are allowed to compare models $A$ and $B$
on graphs with different structures. For the case, when $A \preceq B$, the
quantum walk model $B$ does not have to express the idea of random walk on graph
$G_A$.
\begin{figure}[h]
\centering
\includegraphics[height=3cm, width=4cm]{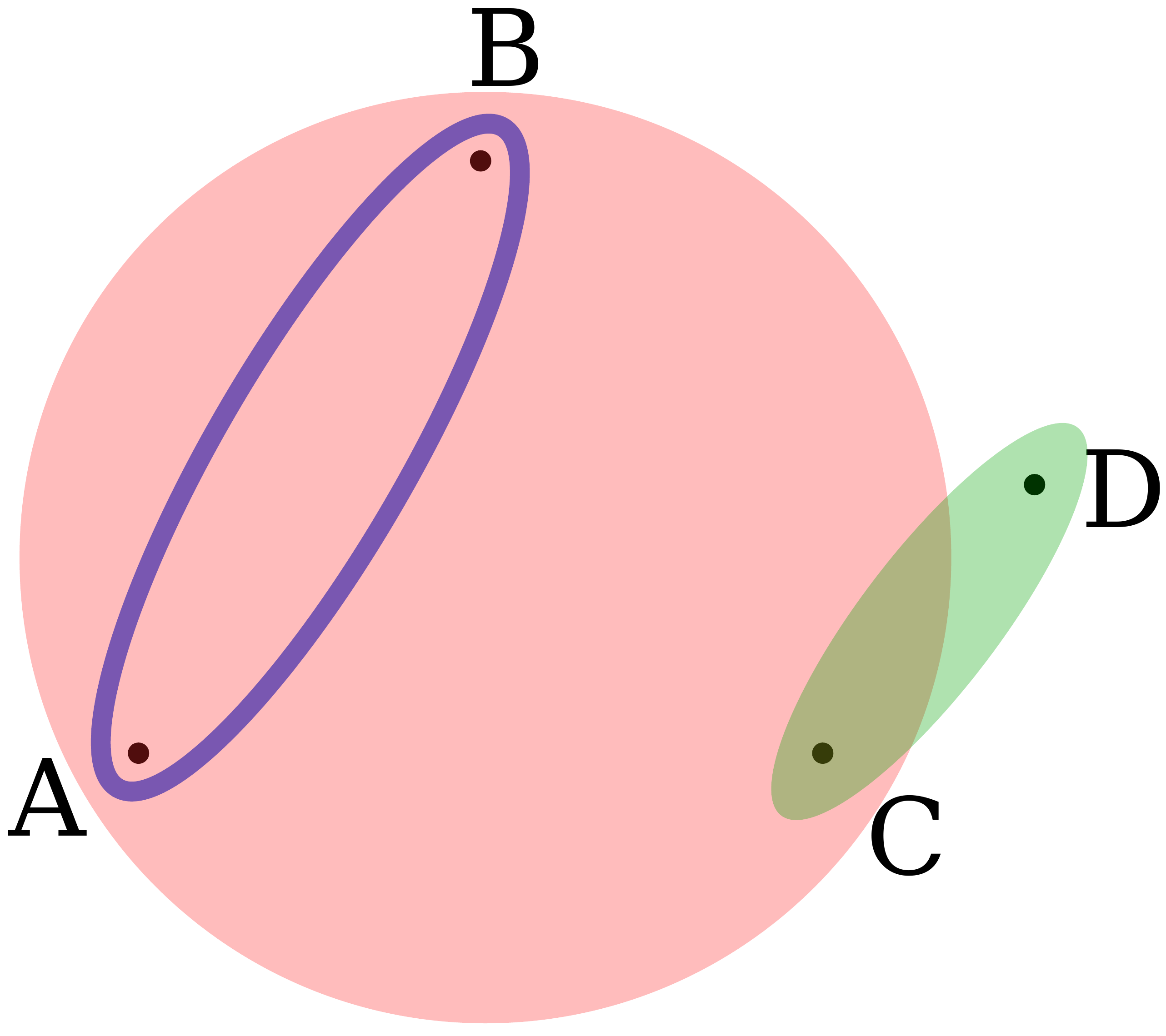}
\caption{Example of hyperwalk. We have the following 3 hyperedges:
	$a=\{A,B,C\},b=\{A,B\},c=\{C,D\}.$}
\label{fig:graph-example3}
\end{figure}
For example if we want to change generalized hyperwalk with 2 different
distributions on the graph shown in Figure~\ref{fig:graph-example3} into the
Szegedy walk, we need to take graph with $336$ vertices. According to the
previous discussion changing the generalized hyperwalk to a staggered walk costs
$7$ vertices. We see that if we want to put this model to generalized coined
walk we should take a graph with $21$ nodes and $8$ coin operators. In the next
step, it is necessary to model coined walk on a graph with $8 \times 21 = 168$
vertices. The last step is cloning of the vertices to obtain Szegedy walk, so we
end up with $336$ nodes required. Of course there still can exist methods to
achieve this result with a smaller number of vertices, but this example is
introduced to show problems which can appear. That is why we introduce a new
concept of comparing two quantum models.

\begin{definition}\label{def2}
Let $A$ and $B$ be two models of a quantum walk. We say that model $A$ is
strongly an instance of model $B$ ($A \prec B$) when for all graphs $G_A(V,E)$
there exists a graph $G_B$, such that the number of basis states in model $B$ is
no greater then the number of basis states in model $A$. Moreover, for all
initial states $\ket{\psi_A}$ there exists an initial state $ \ket{\psi_B}$,
such that for all $n\in \mathbb{N}_0$ and $v\in V$, we have
\begin{equation}
\mathcal{P}_A(QW_A(G_A,\ket{\psi_A},n))(v) = 
\mathcal{P}_B(QW_B(G_B,\ket{\psi_B},n))(v).
\end{equation}
\end{definition}
Based on this definition, we show that every staggered walk is an instance of a 
generalized hyperwalk. Furthermore, we do not need to deeply change the 
structure of the initial graph in order to obtain this behavior.
\begin{theorem}
According to Definition \ref{def2}every staggered walk is an instance of 
the generalized hyperwalk.
\end{theorem}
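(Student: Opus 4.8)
\textit{Proof proposal.}
The plan is to realize an arbitrary staggered walk directly as a (plain, hence \emph{a fortiori} generalized) hyperwalk on the hypergraph that has the same vertex set and a single hyperedge equal to the whole vertex set. Given a graph $G=(V,E)$ with tessellations $\alpha_1,\dots,\alpha_n$ and associated reflections $U_1,\dots,U_n$, write the one--step staggered evolution as $U=U_n\cdots U_1\in\Urm(\XX^V)$. I would take $H=(V,\{e_0\})$ with $e_0=V$. Its hyperwalk space $\XX=\SPAN(\{\ket{v,e_0}:v\in e_0\})$ has dimension $|V|$, i.e. exactly as many basis states as $\XX^V$, so the cardinality requirement of Definition~\ref{def2} is met with equality, and the vertex set is literally unchanged (the ``no deep change of structure'' remark). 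Fix the identification $\iota:\XX^V\to\XX$ given by $\iota\ket{v}=\ket{v,e_0}$.

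Next I would choose the hyperwalk operators. Since each $v$ lies in the single hyperedge $e_0$ only, every star $\SPAN(\{\ket{v,e}\}_{e\ni v})$ is one--dimensional, so I simply take $C_v=\1$, hence $U^V=\idop[\XX]$. The single edge fibre $\SPAN(\{\ket{v,e_0}\}_{v\in e_0})$ is all of $\XX$, so $U^E=S_{e_0}$ may be \emph{any} unitary on $\XX$; I set $S_{e_0}=\iota\,U\,\iota^{-1}$. This uses in an essential way that the hyperwalk's edge operators need not be permutation matrices -- exactly the extra freedom the paper stresses when contrasting the models. The resulting one--step hyperwalk operator is $U^E U^V=\iota\,U\,\iota^{-1}$, and it does not change with time, so we stay within the plain hyperwalk model.

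Then I would check the three matching conditions of Definition~\ref{def2}. For an arbitrary initial state $\ket{\psi_A}\in\XX^V$ put $\ket{\psi_B}=\iota\ket{\psi_A}$. One checks, for every $t\in\mathbb{N}_0$, that
\[
QW_{\mathrm{hyp}}(H,\ket{\psi_B},t)=(U^EU^V)^t\ket{\psi_B}=\iota\,U^t\ket{\psi_A}=\iota\,QW_{\mathrm{stag}}(G,\ket{\psi_A},t).
\]
With the hyperwalk measurement $\mathcal{P}_{\mathrm{hyp}}(\ket{\phi})(v)=\sum_{e}|\braket{v,e}{\phi}|^2=|\braket{v,e_0}{\phi}|^2$ and the staggered measurement $\mathcal{P}_{\mathrm{stag}}(\ket{\phi})(v)=|\braket{v}{\phi}|^2$, unitarity of $\iota$ gives $\mathcal{P}_{\mathrm{hyp}}(\iota\ket{\phi})(v)=\mathcal{P}_{\mathrm{stag}}(\ket{\phi})(v)$, whence $\mathcal{P}_{\mathrm{hyp}}(QW_{\mathrm{hyp}}(H,\ket{\psi_B},t))(v)=\mathcal{P}_{\mathrm{stag}}(QW_{\mathrm{stag}}(G,\ket{\psi_A},t))(v)$ for all $t\in\mathbb{N}_0$ and all $v\in V$. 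Together with the basis--state count, this is precisely $\mathrm{staggered}\prec\mathrm{generalized\ hyperwalk}$.

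I do not expect a serious obstacle; the point to get right is conceptual rather than computational. Definition~\ref{def2} caps the number of hyperwalk basis states by $|V|$, and a short case analysis shows this essentially forces a single hyperedge on $V$: any covering family of $\ge 2$ hyperedges with $\sum_e|e|\le|V|$ must be a partition, which would make $U^V$ diagonal and thus unable to absorb reflections coming from differing tessellations. Hence the crux is to notice that one maximal hyperedge already suffices, precisely because $S_{e_0}$ is allowed to be an arbitrary unitary; once that is seen, every verification above is immediate. (If one instead wanted a construction retaining more of $G$'s combinatorial structure, e.g. hyperedges built from the polygons, one would have to relax the cardinality bound, so within Definition~\ref{def2} the single--hyperedge encoding is the natural route.)
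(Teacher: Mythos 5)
Your construction is correct and is built on the same core idea as the paper's proof: a hypergraph with the original vertex set and a single hyperedge $e_0=V$, trivial (identity) coin operators, and the staggered unitaries loaded into the edge operator, with the obvious identification $\ket{v}\mapsto\ket{v,e_0}$ and matching measurements. The one genuine difference is how the unitaries are loaded: the paper keeps $U_1,\dots,U_n$ separate and lets the shift change cyclically in time, $U^E_k:=U_k$, so it really uses the \emph{generalized} hyperwalk and one staggered step corresponds to a full cycle of $n$ hyperwalk sub-steps; you instead set the single, time-independent shift $S_{e_0}=\iota\,U_n\cdots U_1\,\iota^{-1}$, staying inside the plain hyperwalk model and getting $QW_{\mathrm{hyp}}(H,\iota\ket{\psi},t)=\iota\,U^t\ket{\psi}$ iteration for iteration. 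Your variant is in fact slightly stronger (staggered $\prec$ hyperwalk, not merely generalized hyperwalk) and sits more comfortably with Definition~\ref{def2}, which demands equality of the measured distributions at the \emph{same} step count $n$ --- under the paper's time-cycled shift this only holds if one agrees to count a whole cycle $U^E_n\cdots U^E_1$ as a single iteration. What the paper's version buys in exchange is that the hyperwalk retains the tessellation-by-tessellation structure of the staggered dynamics rather than collapsing it into one opaque unitary, which is closer in spirit to the claim that the hypergraph model generalizes the staggered model as a walk. Your closing remark that the cardinality bound essentially forces a single hyperedge is a dispensable aside (and as stated it glosses over hyperedge families that do not cover $V$), but it does not affect the validity of the main argument.
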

\begin{proof}

For a given graph with the staggered walk, defined by unitary matrices
$U_1,\ldots,U_n$, we introduce a hypergraph with the same number of vertices and
one hyperedge containing all vertices suitable for generalized hyperwalk. One
can see that the spaces for both walks have the same dimensionality hence there
exists a bijection between spaces on staggered and generalized coined walk
models. So we can assume that the coin operator is constant and it is given by
the identity matrix. The shift operator is changing in time in the same manner
as the unitary operators $U_1,\ldots,U_n$ for each tessellations, namely
$U^E_k:=U_k$. If the measurement for staggered walk is
$\mathcal{P}_S(\ket{\psi})(v)=|\braket{v}{\psi}|^2$, then we take
$\mathcal{P}_{GH}= |\braket{v,e}{\psi}|^2$.
\end{proof}

\section{Conclusions}\label{sec:conc}

In this work we introduced a model of quantum walks on hypergraphs and a
generalized version of this model. By generalized we mean that the evolution
operators associated with the walk might change in time. We introduced two
non-equivalent definitions of the case when one quantum walk model is an
instance of another model. The first definition of this equivalence allows us to
heavily manipulate the underlying graph structure of the walk. Using this
definition we shown that hyperwalk model is equivalent to a coin model and the
same for the generalized version.
 
Next, we introduced a stronger version of the equivalence of walk models. In it,
we enforce the graph to be a minimal graph necessary for a given model. In this
regime we were able to show that a generalized hyperwalk introduces in fact new
dynamics. This result completes Table~\ref{tab:motiv} and shows that a quantum
walk on a hypergraph is a generalization of the staggered walk model.

\section*{Acknowledgments}
This work was supported by the polish National Science Centre under project 
numbers 2015/17/B/ST6/01872	({\L}P and PL) and 2016/22/E/ST6/00062	(RK).

\bibliographystyle{ieeetr}
\bibliography{../hyperwalks}
\end{document}